\documentclass{article}

\usepackage{amsthm,amsmath}
\usepackage{amssymb}

\usepackage[dvipdfm,bookmarks=true,bookmarksnumbered=true,%
bookmarkstype=toc]{hyperref}

\newtheorem{Def}{Definition}
\newtheorem{Thm}{Theorem}

\usepackage[longnamesfirst]{natbib}
\def\BibTeX{{\rm B\kern-.05em{\sc i\kern-.025em b}\kern-.08em
    T\kern-.1667em\lower.7ex\hbox{E}\kern-.125emX}}

\usepackage{indentfirst}

\begin{document}
\bibliographystyle{aer} 

\title{Third person enforcement in
a prisoner's dilemma game  }
\author{Tatsuhiro Shichijo}
\date{June 14, 2012}
\maketitle

\begin{abstract}
	We theoretically study the effect of a third person enforcement on a one-shot prisoner's dilemma game played by two persons, with whom the third person plays repeated prisoner's dilemma games. We find that the possibility of the third person's future punishment causes them to cooperate in the one-shot game.
\end{abstract}
\section{Introduction}
Three players play a specific repeated game, in which the stage game  is the prisoner's dilemma game illustrated in Table \ref{tbl:PD}, where $P=45, S=10, T=100, R=75$. 
 In the first stage, players $X_1$ and $X_2$ play the prisoner's dilemma game as the stage game. 
From the second stage, players $M$ and $X_1$ play the stage game with probability 1/2. Players $M$ and $X_2$ also play the stage game with probability 1/2. That is, player $M$ plays the stage game with certainty. Players $X_1$ and $X_2$ play the  stage game  with probability 1/2. They play the game an infinite number of times, with a discount factor of $\delta=3/4$. We assume that each player observes only the outcome of the stage game that s/he plays.
 For example, player $M$ cannot see the action profile of the first stage. 

Players $X_1$ and $X_2$ play the stage game against each other  only once. However, it is  possible that they both play $C$ in the first stage because a third person, player $M$, may enforce cooperation. 
 We analyze whether both players $X_1$ and $X_2$ play $C$ in a sequential equilibrium.

\cite{Kandori1992} showed that a contagious strategy constitutes a cooperative equilibrium in a private monitoring setting if the discount factor is sufficiently large. In section \ref{sec:kandori} of this paper, we show that the \cite{Kandori1992}-type  contagious strategy  cannot constitute a cooperative equilibrium under the parameter settings of this paper. In section \ref{sec:cooperative equilibrium}, we show that another type of strategy profile constitutes a sequential equilibrium. 

\begin{table}
\begin{center}
\begin{tabular}{|c|c|c|}
\hline
 & C & D \\
\hline
C & R & S \\
\hline
D & T & P \\
\hline
\end{tabular}
\caption{Prisoner's Dilemma Game}\label{tbl:PD}
\end{center}
\end{table}

\section{Notation}
 We denote by $(a_1 a_2 )$ the outcome of the first stage in which player $X_1$ plays $a_1$ and player $X_2$ plays $a_2$, where $a_1,a_2 \in \{C,D \}$.
 From the second stage, either player $X_1$ or player $X_2$ is chosen to play the stage game. To identify the selected player, we denote by $(X_i a_ia_M )$ the stage $t$ outcome in which player $X_i$ is selected and plays $a_i$ and player $M$ plays $a_M$. For example, $(X_1CD)$ is the stage outcome in which player $X_1$ plays $C$ and player $M$ plays $D$.
 We denote by $(a^1_1a^1_2;  X_ia^2_ia^2_M ; \dots; X_ja^t_ja^t_M)$ the history of the outcome up to stage $t$.
 Let $H^t$ be the set of histories up to stage $t$.
 The behavioral strategy of player $X_i$ at stage $t$  depends on the history up to stage $t-1$. The behavioral strategy at stage $t$ of player $X_i$ is described by the function $\sigma^t_i:H^{t-1}  \rightarrow \{C,D \}$. 
 By contrast, the behavioral strategy of player $M$ depends on who the opponent is. The behavioral strategy of player $M$ is described by the function $\sigma^t_M:H^{t-1} \times \{X_1,X_2 \} \rightarrow \{C,D \}$.
 When we do not specify player, we use $z$. For example, $(X_zDD;X_zDD)$ means that one of the following outcomes occurs: $(X_1DD;X_1DD)$,
 $(X_1DD;X_2DD)$, $(X_2DD;X_1DD)$, $(X_2DD;X_2DD)$. 
 When we do not specify action, we use $Z$. For example, $(X_1ZZ)$ means that one of the following outcomes occurs: $(X_1CC), (X_1CD), (X_1DC), (X_1DD)$.
 We denote the sequence of $\sigma^t_i$ by $\sigma_i$; i.e., $\sigma_i=(\sigma^1_i, \sigma^2_i,\dots)$. We denote the sequence of $\sigma^t_M$ by $\sigma_M$. Let $\sigma=(\sigma_1,\sigma_2,\sigma_M)$.

\section{The contagious strategy}\label{sec:kandori}
In this section,  we show that a \cite{Kandori1992}-type contagious strategy cannot constitute a sequential equilibrium under the parameter settings of this paper.
A player who plays a contagious strategy plays $D$ if her/his opponent has previously played  $D$ against her/him.
For example, if player $X_1$ plays $D$ against player $M$, then player $M$ plays $D$ not only against player $X_1$, but also against player $X_2$.
If s/he has previously played $D$ against a player, then s/he again plays $D$ against that player. \footnote{Because our game setting is different from that of \cite{Kandori1992}, the behavioral strategy is slightly different from \cite{Kandori1992}'s contagious strategy. That is, even if s/he has played $D$ against player $X_1$, s/he plays $C$ against player $X_2$.} For example, if player $M$ plays $D$ against player $X_1$, then player $M$ uses strategy $D$ against player $X_1$. Otherwise, s/he plays $C$.

When player $M$ observes a deviation by her/his opponent, s/he assumes that the deviation occurred in the first stage if it is a reasonable deviation. Suppose player $M$ observed $(X_1CC;X_2DC)$ in the second and third stages. There are two explanations: (i) player $X_1$ or player $X_2$ played $D$ in the first stage, but player $X_1$ played $C$ even though $X_1$ was supposed to play $D$; (ii) there was no deviation in the first or second stage, but player $X_2$ deviates in the third stage for the first time.
We assume that player $M$ follows (i) and that the player $M$ uses $D$ against both other players.

We can constitute a sequence of assessments with completely mixed strategies that is consistent with the contagious strategy profile and with beliefs that satisfy the above principle. 
Let $\gamma$ be the mixed strategy such that strategy $C$ is played with probability 1/2 and strategy $D$ is played with probability 1/2. 
Let $\hat{\sigma}$ be the contagious strategy. We can base a complete mixed strategy $\tilde{\sigma}$ on $\hat{\sigma}$ with $\epsilon>0$ as follows:
\begin{gather*}
\tilde{\sigma}^1_i(\emptyset)=(1-\epsilon)\hat{\sigma}^1_i(\emptyset) + \epsilon \gamma \text{ for } i \in \{1,2 \}\\
\tilde{\sigma}^t_i(CD;\dots) =(1-\epsilon)\hat{\sigma}^t_i(CD;\dots) + \epsilon \gamma \text{ for } i \in \{1,2 \} , t>1\\
\tilde{\sigma}^t_i(DC;\dots) =(1-\epsilon)\hat{\sigma}^t_i(DC;\dots) + \epsilon \gamma \text{ for } i \in \{1,2 \} , t>1\\
\tilde{\sigma}^t_M(ZZ;\dots \mid X_i) =(1-\epsilon)\hat{\sigma}^t_M(ZZ;\dots \mid X_i) + \epsilon \gamma \text{ for } i \in \{1,2 \} , t>1\\
\tilde{\sigma}^t_i(CC;\dots) = (1-\epsilon^{1/\epsilon})\hat{\sigma}_i(CC;\dots) + \epsilon^{1/\epsilon} \gamma \text{ for } i \in \{1,2 \} , t>1.
\end{gather*}
 We can base the belief $\mu_\epsilon$ on $\tilde{\sigma}$ by Bayes' rule. 
Taking the limit as $\epsilon \rightarrow 0$, $\tilde{\sigma}$ converges to $\hat{\sigma}$ and $\mu_\epsilon$ converges to a belief that satisfies the above principle. This is because $\lim_{\epsilon \rightarrow 0} \epsilon^{1/\epsilon} / \epsilon^k=0$ for all $k \in \mathbb{N}$.

The payoff of player $X_i$ from the contagious strategy profile is 
$R + \delta R/(2(1-\delta))$.
If player $X_i$ plays $D$ in every stage, her/his payoff is $T+\delta T/2 + \delta^2 P/(2(1-\delta))$. If $\delta \geq 0.752903$,
$R + \delta R/(2(1-\delta)) \geq T+\delta T/2 + \delta^2 P/(2(1-\delta))$. If $\delta = 3/4 = 0.75$, which is the parameter setting in this paper, the contagious strategy profile cannot be a sequential equilibrium.

\section{A cooperative equilibrium}\label{sec:cooperative equilibrium}

In this section, we consider a variation of the contagious strategy and show that the  new strategy profile $\sigma$ constitutes a sequential equilibrium. As with the contagious strategy, this strategy is to play $D$ forever if s/he observed that one of his/her opponents  deviated from the strategy. For example, player $M$ plays $D$ against $X_1$ if player $M$ observed that player $X_2$ deviated from the strategy profile.  
The difference between our strategy and the contagious strategy relates to the behavioral strategy in the third stage. If player $X_i$ is selected in the second stage and player $X_j(j \neq i)$ is selected in the third stage, then player $X_j$ is allowed to play strategy $D$ in the third stage . In this case, the outcome in the third stage is $(X_jDC)$. Thereafter, player $M$ and player $X_j$ continue to choose $C$. On the other hand, if player $X_i$ is selected in the second and third stages, then player $X_i$ must play $C$ in the third stage. In this case, if player $X_i$ plays $D$ in the third stage, then player $M$ plays $D$ thereafter.

For example, $(CC;X_1CC;X_1CC;X_zCC;X_zCC;X_zCC;\dots)$ or \\
 $(CC;X_1CC;X_2DC;X_zCC;X_zCC;\dots)$ are outcomes on the path of the strategy.

Formal definitions of $\sigma$ are as follows:
\begin{Def}
\begin{flalign*}
&\sigma^1_1(\emptyset)=\sigma^1_2(\emptyset)=C&\\
&\sigma^2_1(CC)=\sigma^2_2(CC)=C&\\
&\sigma^2_M(ZZ \mid X_i)=C \text{ for $i \in \{1,2 \}$}&\\
&\sigma^3_i(CC;X_iCC)=C \text{ for } i=\{1,2 \}&\\
&\sigma^3_i(CC;X_jCC)=D \text{ for $i,j=\{1,2 \}$, where } j \neq i&\\
&\sigma^3_M(ZZ;X_zCC \mid X_z)=C&\\ 
&\sigma^3_M(ZZ;X_iCD \mid X_j)=C \text{ for $i,j=\{1,2 \}$, where } j \neq i&\\ 
&\sigma^4_i(CC;X_iCC;X_iCC ) =C \text{ for $i,j=\{1,2 \}$, where }i \neq j&\\
&\sigma^4_i(CC;X_iCC;X_jZZ ) =C \text{ for $i,j=\{1,2 \}$, where }i \neq j&\\
&\sigma^4_i(CC;X_jZZ;X_iDC ) =C \text{ for $i,j=\{1,2 \}$, where }i \neq j&\\
&\sigma^4_i(CC;X_jZZ;X_jZZ ) =C \text{ for $i,j=\{1,2 \}$, where }i \neq j&\\
&\sigma^4_M(ZZ;X_iCC;X_iCC \mid X_i) =C \text{ for $i=\{1,2 \}$} &\\
&\sigma^4_M(ZZ;X_iCZ;X_iCZ \mid X_j) =C \text{ for $i,j=\{1,2 \}$, where }i \neq j&\\
&\sigma^4_M(ZZ;X_iCC;X_jZZ \mid X_i) =C \text{ for $i,j=\{1,2 \}$, where }i \neq j&\\
&\sigma^4_M(ZZ;X_iCZ;X_jDC \mid X_j) =C \text{ for $i,j=\{1,2 \}$, where }i \neq j&.
\end{flalign*}
The behavioral strategy played up to stage 4, which is not listed above, is $D$.
From the fifth stage, the strategy is the same as the contagious strategy. That is, if a player plays $D$ after the fifth stage, her/his opponent subsequently plays  $D$ and s/he subsequently plays $D$ against the opponent.
\end{Def}

When player $M$ observes a deviation by her/his opponent, as in section \ref{sec:kandori}, player $M$ presumes that this deviation occurred in the first stage if it is reasonable.  On the other hand, if player $M$ observes $(X_1CC;X_2CC)$, s/he does not assume that the deviation occurred in the first stage because it is unreasonable. As in section \ref{sec:kandori}, a belief that satisfies the above principle is the limit of the beliefs based on the complete mixed strategy.

We show that the above strategy profile constitutes  a sequential equilibrium for $\delta = 0.75$.
\begin{Thm}
$\sigma$ constitutes a sequential equilibrium if $\delta=0.75$, $P=45, S=10, T=100, R=75$.
\end{Thm}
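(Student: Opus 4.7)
The plan is to verify the two conditions for a sequential equilibrium: consistency of beliefs and sequential rationality at every information set. Consistency follows the template of Section~\ref{sec:kandori}: I would perturb $\sigma$ to a completely mixed $\tilde\sigma_\epsilon$ in which ordinary deviations occur with probability of order $\epsilon$, while the ``unreasonable'' deviations singled out after histories such as $(CC;X_iCC;X_jCC)$ occur with the much smaller probability $\epsilon^{1/\epsilon}$. Since $\epsilon^{1/\epsilon}/\epsilon^k\to 0$ for every $k\in\mathbb{N}$, the limiting Bayesian posterior of $M$ after an observed deviation puts all weight on a stage-$1$ origin when that is reasonable and on a later-stage origin otherwise, which is precisely the belief rule described just before the theorem.

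For sequential rationality I would invoke the one-shot deviation principle and rely on two workhorse continuation values for $X_i$: the cooperative value $W=R/(2(1-\delta))=150$ and the punishment value $V_D=P/(2(1-\delta))=90$ (obtained when $M$ defects forever and $X_i$ best-responds with $D$). The ``tail'' check from stage~$5$ onward reduces to $R+\delta W\ge T+\delta V_D$, i.e.\ $187.5\ge 167.5$; the same inequality also settles the same-player-selected-twice branch at stage~$3$ and all on-path histories at stage~$4$. An analogous check for $M$ has more slack, since $M$'s cooperative continuation gap $(R-P)/(1-\delta)=120$ dominates the one-period temptation $(T-R)/\delta\approx 33.3$. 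The delicate case is stage~$3$ when a fresh player $X_j$ is drawn: I would verify that $X_j$ weakly prefers the prescribed~$D$ because the on-path continuation $T+\delta W=212.5$ dominates the one-shot deviation $R+\delta V_D=142.5$ (the latter, being ``unreasonable'', triggers contagion in $M$'s limit belief), and that $M$ weakly prefers the prescribed~$C$ because the cooperative gap again dominates the per-period temptation $(P-S)/\delta\approx 46.7$.

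Stage~$1$ is where the new strategy must improve on the contagious profile that failed in Section~\ref{sec:kandori}. Accounting for the stage-$3$ bonus $T$ that $X_i$ pockets with probability $1/4$ (when $X_j$ is selected at stage~$2$ and $X_i$ at stage~$3$), the on-path payoff is
\[
U^C \;=\; R + \delta\cdot\tfrac{R}{2} + \delta^2\cdot\tfrac{R+T}{4} + \delta^3\cdot\tfrac{R}{2(1-\delta)},
\]
which at $\delta=3/4$ evaluates to $\approx 191.02$. The best one-shot deviation (play $D$ at stage~$1$ and thereafter follow $\sigma$, whose ``unlisted $\Rightarrow D$'' convention defaults to $D$ after any non-$(CC)$ first-stage history) yields exactly the Section~\ref{sec:kandori} quantity $T+\delta T/2+\delta^2 P/(2(1-\delta))=188.125$, so a slack of $\approx 2.9$ closes the constraint that failed for the plain contagious profile. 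The main obstacle is not any single inequality but the bookkeeping across the many off-path information sets produced by the default convention: at each such history one must confirm that the default $D$ is itself a best response under the limit beliefs, and it is precisely these default punishments that make the threats sustaining stage-$3$ behavior credible.
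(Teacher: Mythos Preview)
Your plan mirrors the paper's own argument: consistency via the same two-speed trembles, then a case-by-case one-shot-deviation check with the workhorse values $W=R/(2(1-\delta))$ and $V_D=P/(2(1-\delta))$. Your stage-$1$ and tail computations reproduce exactly the paper's numbers ($191.016$ vs.\ $188.125$, and $187.5$ vs.\ $167.5$), and your stage-$3$ fresh-player analysis is compatible with the paper's Case~3.

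Two of the paper's six cases, however, are not covered by your sketch and are less routine than your wording suggests. First, $M$'s stage-$2$ incentive (the paper's Case~4) cannot be folded into the generic ``$M$ has more slack'' remark: on path $M$ anticipates receiving the sucker payoff $S$ with probability $1/2$ at stage~$3$, and after a stage-$2$ deviation $M$ still cooperates with the non-victim, so the relevant comparison is $R+\delta(R+S)/2+\delta^2 R/(1-\delta)=275.625$ versus $T+\delta(S+P)/2+\delta(P+R)/(2(1-\delta))=255.625$; the slack is $20$, not the $120$ your cooperative-gap heuristic suggests. Second, and more importantly, the paper's Case~5 is precisely the off-path credibility check you flag as ``bookkeeping'': after a stage-$1$ outcome in $\{CD,DC\}$, the selected $X_i$ must prefer to carry out the prescribed $D$ (thereby signalling the deviation to $M$) rather than play $C$ and postpone $M$'s punishment. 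The paper shows the margin here is thin---$T+\delta V_D=167.5$ against at most $R+\delta T/2+\delta^2 V_D=163.125$ (and a second multi-stage deviation yields $160.5$)---so this is not a step that can be left implicit. Once you add these two checks, your outline coincides with the paper's proof.
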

\begin{proof}
We investigate the following cases.
\begin{description}
\item[Case 1]  (in which  $(X_iDD)$ is assumed to be played in the strategy profile $\sigma$): 
The stage payoff obtained from playing $C$ is lower than that obtained from playing $D$. Regardless of the action s/he takes, the opponent continues to play $D$ in subsequent stages. Playing $C$ never improves the payoff  obtained from the next stage. Thus, there is no incentive to deviate.
\item[Case 2] (in which there is no deviation and in which $(X_iCC)$ is assumed to be played in the strategy profile in the second stage or later):
 The expected continuation payoff obtained by player $X_i$ from playing $C$ is $R + \delta R /(2(1-\delta))=187.5$. The expected continuation payoff obtained by player $X_i$ from playing $D$ is, at most, $T + \delta P / (2(1-\delta))=167.5$. Thus, player $X_i$ has no incentive to deviate. It is easily checked that the same applies for player $M$.
\item[Case 3] (in which, after playing $(CC;X_iCC)$, player $X_j(j\neq i)$ is selected to play in the third stage): 
Clearly, player $X_j$ has no incentive to deviate because player $X_j$ is expected to play $D$. Player $M$'s expected continuation payoff from playing $C$ is $S + \delta R /(1-\delta)=  235$. Player $M$'s expected continuation payoff from playing $D$ is, at most, $P + \delta/2 \times (P+ R) /(1-\delta)=225$. Thus, there is no incentive to deviate.
\item[Case 4] (in which player $M$ gets to choose an alternative in the second stage):  
The expected continuation payoff for player $M$ from playing $C$ is $R + \delta R /2 + \delta S / 2 + \delta^2 R/(1-\delta)=275.625$. The expected continuation payoff for player $M$ from playing $D$ is, at most, $T + \delta (S/2 + P/2) + \delta P /(2(1-\delta))+  \delta R /(2(1-\delta))=255.625$. Thus, there is no incentive to deviate.
\item[Case 5] (in which the first stage outcome  is $(CD)$ and the current stage outcome is assumed to be $(X_iDC)$):
 The expected continuation payoff for player $X_i$ from playing $D$ is $T+\delta P/(2(1-\delta))=167.5$. The payoff obtained from playing $C$ is, at most, $R+\delta T/2 + \delta^2 P/(2(1-\delta))=163.125$. Note that player $M$ adopts a type of contagious strategy. Player $M$ plays $D$ against $X_i$ after player $M$ plays against $X_j(j\neq i)$.
If player $X_i$ continues to play $C$ whenever player $M$ does not play against $X_j(\neq  X_i)$, the expected continuation payoff is: 
$$R+ \delta \frac{ R}{2} + \frac{1}{2} \sum_{s=2} \delta^s \left\{ \left(\frac{1}{2}\right)^{s-1} R + (1-\left(\frac{1}{2}\right)^{s-1} ) P \right\}$$ 
$$=R+ \delta \frac{R}{2}+\frac{1}{2} \left( \frac{\delta^2 (R - P)}{2-\delta} + \frac{\delta^2 P}{1-\delta}  \right) =160.5.$$ Thus, there is no incentive to deviate in this case.
\item[Case 6] (in which player $X_i$ gets to choose an alternative in the first stage): 
If player $X_i$ plays $C$, the expected continuation payoff is $R + \delta R/2 + \delta^2 ( R + T)/4 + \delta^3 R/(2(1-\delta))=191.016 $. If player $X_i$ plays $D$ in the first stage, the expected continuation payoff is, at most, $T+ \delta T/2 + \delta^2 P/(2(1-\delta))=188.125 $.  Thus, there is no incentive to deviate in this case.
\end{description}

The above results show that our strategy profile constitutes a sequential equilibrium strategy.

\end{proof}

\begin{verbatim}
sha256 9153673a330a7c8d01edb860ca8ffdd04be176cf0cc0245be495d27cfae488d0  
19001131-001-40305660-07-0001.zip
0f6a0c1477c2a8c70cf1d83edaff0512dc85e699e2afd8794d2328130bf16c66  
GAIYO_19001131-001-40305660-07-0001.pdf
ce92efa636ce53fa0fcaada157a5334e3f79c8f0865f2c0d704aff52d09da618  
MoneyDraft.pdf
\end{verbatim}


\end{document}